\newtheorem{lem}{Lemma}
\newtheorem{thm}{Theorem}
\newtheorem{cor}{Corollary}
\newtheorem{prop}{Proposition}
\newtheorem{defi}{Definition}
\newcommand{\ci}{\mathcal{I}}
\newcommand{\cj}{\mathcal{J}}
\newcommand{\cn}{\mathcal{N}}
\newcommand{\cs}{\mathcal{S}}
\newcommand{\bro}{\rho_0^{\cj}}
\newcommand{\R}{\mathbb{R}}
\renewcommand{\O}{\mathbb{O}}
\newcommand{\sti}[2]{\genfrac{}{}{0pt}{}{#1}{#2}}
\title[Global stability for a class of virus models]{Global stability for a class of virus models with CTL immune response and antigenic variation}
\author{Max O. Souza}
\address{%
Departamento de Matem\'atica Aplicada, Universidade Federal
Fluminense, R. M\'ario Santos Braga, s/n, Niter\'oi, RJ 22240-120, Brazil.
}
\email{msouza@mat.uff.br}
\author{Jorge P. Zubelli}
\address{%
IMPA, Est. D. Castorina 110, Rio de Janeiro, RJ 22460-320, Brazil
}
\email{zubelli@impa.br}
\date{\today}
\thanks{ The authors thank an anonymous referee for suggesting that an early version of  Theorem  1 could be extended as presented here. MOS is partially supported by FAPERJ grants  170.382/2006 and  110.174/2009. JPZ was supported by CNPq under grants 302161/2003-1 and 474085/2003-1.}
\keywords{Global Stability, HIV, Population Dynamics, Immune Response, Mutation, Lyapunov Functions}
\subjclass[2000]{Primary 92D30; Secondary 34D23, 37C75}
\begin{document}

\begin{abstract}
We study the global stability of a class of models for in-vivo virus dynamics, that take into account the CTL  immune response and display antigenic variation. This class includes a number of models that have been extensively used to model HIV dynamics. We show that  models in this class are globally asymptotically stable, under mild hypothesis, by using appropriate Lyapunov functions. We also characterise the  stable equilibrium points for the entire biologically relevant parameter range. As a byproduct, we are able to determine  what is  the diversity of the persistent  strains.
\end{abstract}

\maketitle

\section{Introduction}

\subsection{Models for in-vivo virus dynamics}
A number of population dynamics models have been proposed in order to describe the HIV in-vivo dynamics \cite{Perelson:Nelson:1999,Nowak:May:2000}. Although these models have distinct features, since they attempt  to incorporate different aspects of the interaction between the virus and the  immune system, many of them share a common long-term behaviour, evolving towards an isolated equilibrium state \cite{Nowak:May:2000}.

The basic model for the HIV in-vivo dynamics is given by a three-by-three, first-order system of ordinary differential equations (ODEs)---\cite{Nowak:Bangham:1996,Bonhoeffer:May:Nowak:1997,Nowak:May:2000}:
\begin{equation}
 \left\{
\begin{array}{rcl}
 \dot{x}&=&\lambda-dx-\beta xv,\\
\dot{y}&=&\beta x v - ay,\\
\dot{v}&=& ky-uv.\\
\end{array}
\right.
\label{nb:basic}
\end{equation}
In this model, $x$ denotes the uninfected cells, $y$ the infected cells and $v$ the free virus particles. The average lifetime of an infected cell is $1/a$, while the average lifetime of a virus particle is $1/u$. The total number of virus particles  produced by an infected cell is $k/a$. Healthy cells are infected at a rate $\beta xv$.  New CD4+ T cells are produced, in the thymus, at a rate $\lambda$, and die at a rate $dx$.

System \eqref{nb:basic} has two equilibrium points:
\begin{enumerate}
 \item The disease free equilibrium:  $x^*=\lambda/d$, $y^*=v^*=0$;
\item The endemic equilibrium: $x^*=au/\beta k$, $y^*=(\beta\lambda k - dau)/\beta a k$, $v^*=(\beta\lambda k - dau)/\beta a u$.
\end{enumerate}
The long term dynamics of System~\eqref{nb:basic} can be entirely described in terms of the dimensionless
 parameter
\begin{equation}
 R_0=\frac{\beta\lambda k}{dau},
\label{r0:def}
\end{equation}
also known as the \textit{basic reproductive ratio}. 

If $R_0\le 1$, the disease free equilibrium is a global attractor, and the infection cannot persist. If $R_0>1$, the endemic equilibrium becomes a global attractor, and the infections persists indefinitely. This has been first observed numerically ~\cite{Nowak:Bangham:1996,Bonhoeffer:May:Nowak:1997,Nowak:May:2000}. Mathematical proofs of these global stability characteristics were given by \citeasnoun{Li:Muldowney:1995}, using Hirsch's theory of competitive differential systems---see \citeasnoun{Smith:1995}---and, more recently  by \citeasnoun{Leenheer:Smith:2003} and \citeasnoun{Korobeinikov:2004} using a Lyapunov function approach.
 
Given the notable ability of the HIV to escape from immune response, there is interest in studying models that account for a more detailed immune response as, for instance, the role of cytotoxic T lymphocytes (CTLs). An example is the following four-by-four system of ODEs \cite{Nowak:Bangham:1996}:
\begin{equation}
 \left\{
\begin{array}{rcl}
 \dot{x}&=&\lambda-dx-\beta xv,\\
\dot{y}&=&\beta x v - ay - pyz,\\
\dot{v}&=& ky-uv,\\
\dot{z}&=&cyz-bz.
\end{array}
\right.
\label{nb:immune}
\end{equation}
System \eqref{nb:immune} extends \eqref{nb:basic} by introducing the $z$ variable, that denotes the CTL response. Infected cells are killed at a rate $pyz$, while antigen stimulation produces CTL cells at a rate $cyz$. In the absence of such a stimulation, CTL cells decay at a rate $bz$.

In the same vein, the high mutation rate of HIV naturally leads  to the study of the interplay between immune response and virus diversity for a number of different strains. The immune response produces a selection pressure on these different strains of the virus, as discussed in \citeasnoun{Nowak:Bangham:1996}, when studying and  numerically analysing a $(3n+1)$-by-$(3n+1)$ first-order ODE system of the form:
\begin{equation}
 \left\{
\begin{array}{rclr}
  \dot{x}&=&\lambda-dx-x\sum_{i=1}^n\beta_i v_i,& \\
\dot{y_i}&=&\beta_i x v_i - a_iy_i - p_iy_iz_i,& i=1,\ldots,n,\\
\dot{v_i}&=& k_iy_i-u_iv_i, & i=1,\ldots,n,\\
\dot{z_i}&=&c_iy_iz_i-b_iz_i, & i=1,\ldots,n.
\end{array}
\right.
\label{nb:antigenic}
\end{equation}
System \eqref{nb:antigenic} is a slightly generalised form of the system studied by \citeasnoun{Nowak:Bangham:1996}, where  the more restricted case $a_i=a$, $p_i=p$, $u_i=u$, $c_i=c$, and $b_i=b$ was addressed.  

In all these models, there is the question as whether the  long term dynamics approaches an equilibrium or, more generally, an attractor, and how this might depend on the initial condition. There is compelling numerical evidence---cf. \cite{Nowak:Bangham:1996,Bonhoeffer:May:Nowak:1997,Nowak:May:2000}---that Systems \eqref{nb:immune} and \eqref{nb:antigenic} are globally asymptotically stable. However, no mathematical proof of this fact seems to be available. In System~\eqref{nb:antigenic} there is also the question of determining the antigenic diversity at  the equilibrium.

In this work, we study the stability characteristics of the models given by \eqref{nb:antigenic} following a Lyapunov approach. The Lyapunov functional used here has been used before by \citeasnoun{Korobeinikov:Wake:1999}, in the global analysis of three-dimensional predator-prey systems, and by \citeasnoun{Korobeinikov:2004} and \citeasnoun{Korobeinikov:2004b} in the global analysis of various virus models. More precisely, by using an appropriate linear combination of this Lyapunov functional,  we are able to show global asymptotic stability results for System \eqref{nb:antigenic}, and hence to \eqref{nb:immune}. 

The plan for this article goes as follows: We close this introductory section with further biological background and motivations. In Section~\ref{prelims} we address some preliminary issues such as choice of dimensionless variables and parameter reductions.
We study the global stability characteristics of model~\eqref{nb:immune}.
In Section~\ref{antimodel}, we study the equilibria and global stability of model~\eqref{nb:antigenic} under the assumption of unique fitnesses of the strains. In this case,  we determine the possible equilibria of \eqref{nb:antigenic}, and show than there are $2^{n-1}(n+2)$ equilibrium points. We also show that system is globally asymptotic stable, and we determine what is the global attractor in the nonnegative orthant of $\R^{3n+1}$. As a byproduct, we characterise the attained diversity and show that it is monotonically increasing with the strength of the immune response. Some additional results for the case of nonunique fitnesses are also presented.
We conclude in Section~\ref{conclusions} with a discussion of some of the implications
of our results.

\subsection{Biological Background and Motivation}

The models studied by equation~\ref{nb:antigenic} have many potential biological applications. Most notably, to within-host infections connected to
cytotoxic T lymphocytes with antigenic variation including, but not restricted to, HIV infection.
A better understanding of how the within-host HIV, interacts with immune cells seems
to be a key factor in the development of effective long-term therapies or possibly preventive vaccines
for deadly diseases such as the acquired immunodeficiency syndrome \cite{Nowak:May:2000}. 
Mathematical modeling of the underlying biological mechanisms and a good understanding of the theoretical implications of such models is crucial in this process. 
Indeed, it helps clarifying and testing assumptions, finding the smallest number of determining factors to explain the biological phenomena, and analysing the experimental results \cite{AB2003}.
Furthermore, modeling has already impacted on research at molecular level \cite{Nowak:May:2000} and  important
results have been obtained in modeling the virus dynamics for several infections, such as the 
HIV \cite{Nowak:Bangham:1996,PKD1993,PNML1996}, hepatitis B \cite{MRB1991}, hepatitis C \cite{NLDG1998}, and influenza \cite{BR1994a}.

In the particular case of the HIV infection, the dynamics of the within-host infection goes as follows: First, the HIV enters a T cell. 
Being a retrovirus, once the HIV is  
inside the T~cell, it makes a DNA copy of its viral RNA. For this process it requires the reverse transcriptase (RT) enzyme. The DNA of the virus is then inserted in the T-cell's DNA. The latter in turn will produce viral particles that can bud off the T~cell to infect other ones.  Before one such viral particle leaves 
the infected cell, it must be equipped with {\em protease}, which is an enzyme used to cleave a long
protein chain. Without protease the virus particle is uncapable of infecting other T cells. 

One of the key characteristics of HIV is its extensive genetic variability. In fact, the HIV seems
to be changing continuously in the course of each infection and typically the virus strain that initiates the patient's infection differs from the one found a year ore more after the infection. In this respect, the introduction of the different strains in the model
is crucial for it to be realistic.

In general terms, one can also say that Model (\ref{nb:antigenic}) is similar in spirit to other models such as food-chain models. The latter have attracted substantial interest by a number of authors. See for example~\cite{roysolimano,kooibsb2001,kooibsb1998} and references therein. 
However, the presence of more general quadratic terms or of logistic terms on the
right hand side of the different ``strains" leads to a potentially richer dynamics than
the globally stable present in Model \ref{nb:antigenic}. See for example \cite{kooibsb2001} for a bifurcation analysis of certain food chain models.

\section{Preliminaries} \label{prelims}

\subsection{Parameter reduction}

As already noticed in the introduction, a more restricted form of \eqref{nb:antigenic} with a number of parameters being strain-independent has been studied by \citeasnoun{Nowak:Bangham:1996}. It turns out that some parameters in \eqref{nb:antigenic} can indeed be taken to be strain-independent, as we now show.

We start by noting that if $k_i=0$, then $v_i$ decays exponentially with rate $u_i$. Also, if $p_i=0$, then the dynamics of $z_i$ does not impinges on the rest of the system. Thus, without loss of generality, we can assume that $k_i,p_i\not=0$, $i=1,\ldots,n$. In this case, following \citeasnoun{Pastore:2005}, we rescale the $v_i$s and $\beta_i$s. In addition, we also rescale the $z_i$s. More precisely, the change of variables
\begin{equation}
 v_i\mapsto \frac{k_i}{k}v_i, \quad
z_i\mapsto \frac{p}{p_i}z_i
\quad\text{and}\quad
\beta_i \mapsto \frac{k}{k_i}\beta_i
\label{red:change}
\end{equation}
takes \eqref{nb:antigenic} into
\begin{equation}
 \left\{
\begin{array}{rclr}
\dot{x}&=&\lambda-dx-x\sum_{i=1}^n\beta_i v_i,& \\
\dot{y_i}&=&\beta_i x v_i - a_iy_i - py_iz_i,& i=1,\ldots,n,\\
\dot{v_i}&=& ky_i-u_iv_i, & i=1,\ldots,n,\\
\dot{z_i}&=&c_iy_iz_i-b_iz_i, & i=1,\ldots,n.
\end{array}
\right.
\label{our:antigenic}
\end{equation}
Intuitively, the change of variables \eqref{red:change} reflects that only the ratio $\beta_i/k_i$ turns out to be important, and that this can be already taken into account in the $\beta_i$s, provided we rescale the $v_i$s. Moreover, it also shows that the precise value $p_i$ does not matter, as long as it is nonzero. 

\subsection{Dimensionless constants}

In \citeasnoun{Nowak:Bangham:1996}, it was already  observed that, in addition to $R_0$, the quantity $cy^*/b$ is also important in determining the global equilibria.  In a more precise fashion, \citeasnoun{Nowak:May:2000} define 
\[
 R_{I}=1+\frac{\beta b k}{c d u},
\]
which they term the basic reproductive ratio in the presence of immune response. 
\
However, we follow \citeasnoun{Pastore:2005}, and find  more convenient to write 
\[
 R_{I}=1+\frac{R_0}{I_0},
\]
where
\[
 I_0=\frac{c\lambda}{ab}.
\]
An alternative dimensionless constant is the CTL reproduction number given by
\[
 P_0=\frac{I_0(R_0-1)}{R_0}.
\]
Although only two constants among $R_0$, $I_0$ and $P_0$ are independent, and are sufficient to describe the regimes of \eqref{our:antigenic}, we have chosen to use both, three constants, as some conditions are better characterised by $P_0$, while much of the algebra in the Lyapunov functional derivatives is better handled by expressing them in terms of $R_0$ and $I0$. Thus, we shall use the strain dependent constants:
%
%
\begin{equation}
 R_0^i=\frac{\beta_i\lambda k}{da_iu_i},\quad I_0^i=\frac{c_i\lambda}{a_ib_i}\quad
\text{and}\quad P_0^i=\frac{I_0^i(R_0^i-1)}{R_0^i}.
\end{equation}
\subsection{Strain sets}

In order to deal with plethora of equilibria that arises in System~\eqref{our:antigenic}, we shall now define some notation for some special set of strain indices. This will allows us to deal conveniently with the combinatorial structure of the equilibria.

Without loss of generality, we shall assume that the strains are indexed by increasing order of $R_0^i$.

Let $\cn=\{1,2,\ldots,n\}$. Then, we define the set of the strong responders as
\[
 \cs=\{i \in\cn \,| \,P_0^i>1\}.
\]
\begin{defi}
 We shall say that the set $\cs$ of strong responders is consistent, if
\[
 \cs=\{1,\ldots,m\}, \quad 1\leq m\leq n.
\]
This is certainly the cases, if the $I_0^i$ satisfy $I_0^{i}\geq I_0^{i+1}$. In particular, this holds if $I_0^i=I_0$ as in the model studied by \citeasnoun{Nowak:Bangham:1996}.
\end{defi}
Given a set of indices $\ci$, we define
\[
 \rho_0^\ci=\sum_{i\in\ci}\frac{R_0^i}{I_0^i}.
\]
Two important definitions are given below:
\begin{defi}
We shall say that $\ci\subset\cs$ is an antigenic set, if 
\begin{equation}
 R_0^i\geq 1 + \rho_0^\ci,\quad i\in\ci 
\label{ineq:cond:1}
\end{equation}
holds. In addition, if
\begin{equation}
 R_0^i\leq 1 + \rho_0^\ci,\, i\not\in\ci.\label{ineq:cond:2}
\end{equation}
also holds, we shall say that $\ci$ is a stable antigenic set.

Notice that, if $\cs\not=\emptyset$, we have that $\ci=\{1\}$ is an antigenic set. Let $l$ be the largest integer for which the set $\cj=\{1,\ldots,l\}$ is an antigenic set. Then we shall say that $\cj$ is the maximal antigenic set. 
\end{defi}
Two important facts about the maximal and stable antigenic sets are collected below:
\begin{lem}
\label{lem:sets}
Assume that $\cs\not=\emptyset$, and that the strain basic reproductive numbers are distinct.
\begin{enumerate}
\item 
If a stable antigenic set exists, then it  is also the  maximal antigenic set. In particular, stable antigenic sets are unique.
 \item If $\cn$ is the maximal antigenic set, then it is a stable antigenic set.
\end{enumerate}
\end{lem}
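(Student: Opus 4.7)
The plan is to dispatch part (2) almost by inspection and then prove part (1) by extracting a separation property from the stability condition. For part (2), if $\cn$ itself is the maximal antigenic set, then there are no indices $i\notin\cn$, so \eqref{ineq:cond:2} is vacuously true, and $\cn$ is automatically stable.

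For part (1), let $\ci$ be any stable antigenic set. The key observation is that combining \eqref{ineq:cond:1} and \eqref{ineq:cond:2} yields, for every $i\in\ci$ and every $j\notin\ci$,
\[
R_0^i \,\geq\, 1+\rho_0^\ci \,\geq\, R_0^j.
\]
Because the $R_0^k$ are assumed distinct, in fact $R_0^i > R_0^j$. Under the paper's indexing of strains, this forces $\ci$ to be an initial segment of $\cn$, say $\ci=\{1,\dots,l\}$.

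Next I would argue that this $\ci$ coincides with the maximal antigenic set $\cj$. Since $\ci$ is itself antigenic, $l\leq|\cj|$. For the reverse inequality, suppose $\{1,\dots,l'\}$ were antigenic for some $l'>l$. Pick any $j\in\{l+1,\dots,l'\}\subset\cn\setminus\ci$; antigenicity of the larger set gives $R_0^j\geq 1+\rho_0^{\{1,\dots,l'\}}$, while monotonicity of $\rho_0^{(\cdot)}$ with respect to set inclusion (each summand $R_0^i/I_0^i$ being strictly positive) gives $\rho_0^{\{1,\dots,l'\}}>\rho_0^\ci$. Combining these inequalities yields $R_0^j>1+\rho_0^\ci$, contradicting the stability condition \eqref{ineq:cond:2} on $\ci$. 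Hence $\ci=\cj$, and uniqueness of stable antigenic sets follows at once, since any two of them must coincide with the (intrinsically unique) maximal antigenic set.

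The only conceptual point in the whole argument is the separation observation: stability sandwiches $1+\rho_0^\ci$ strictly between the in-set and out-of-set values of $R_0^k$, which combined with distinctness forces $\ci$ to be an initial segment in the chosen ordering. Everything else is elementary set-inclusion monotonicity of $\rho_0^{(\cdot)}$, so I do not anticipate a real obstacle to the argument.
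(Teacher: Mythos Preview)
Your argument is correct and follows essentially the same route as the paper's proof: first use the sandwich $R_0^i \geq 1+\rho_0^{\ci} \geq R_0^j$ together with distinctness to force $\ci$ to be an initial segment $\{1,\dots,l\}$, and then rule out any strictly larger initial segment being antigenic via the monotonicity of $\rho_0^{(\cdot)}$ and condition~\eqref{ineq:cond:2}. If anything, your version is a bit tidier, since you treat a general $l'>l$ directly whereas the paper only writes out the case $l'=l+1$; part~(2) is identical in both.
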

\begin{proof}
\begin{enumerate}

\item Assume that $\ci$ is a stable antigenic set and let  $i_l=\max\ci$. If $1\leq k<l$, then $i_k\in\ci$. Indeed, by the increasing ordering and \eqref{ineq:cond:1}, we have that
\[
 R_0^{i_k}> R_0^{i_l}\geq 1+\rho_0^\ci.
\]
But this contradicts \eqref{ineq:cond:2}, thus we must have $i_k\in\ci$.
Now assume that $\ci'=\{i_1,\ldots,i_{l+1}\}$ is also an antigenic set. Then we must have
\[
 R_0^{l+1}\geq 1 + \rho_0^{\ci'}= 1 + \rho_0^\ci+\frac{R_0^{l+1}}{I_0^{l+1}}\geq 1+\rho^\ci.
\]
But this again contradicts \eqref{ineq:cond:2} and, hence, that $\ci$ is stable antigenic. Therefore, it is maximal.
\item This follows since, in this case, \eqref{ineq:cond:2} cannot be violated.
\end{enumerate}
\end{proof}

\section{The model with antigenic variation}\label{antimodel}

In this section we shall study the stability of of \eqref{our:antigenic} in the non-negative orthant of $\R^{3n+1}$ which we shall denote by $\O$. The positive orthant will be denoted by $\O^+$.

We observe that the planes $z_i=0$ and that $\O$ 
are positive invariant sets for \eqref{nb:antigenic}, since the field points inwards.

The equilibria and stability characteristics of \eqref{our:antigenic} depend  significantly whether the $R_0^i$s are distinct or not. In \S 3.1 and \S 3.2, we describe the equilibria and study their stability in the case of unique fitnesses, i.e., we assume that if $R_0^i=R_0^j$, then $i=j$. In this case, with the adopted order, we have  that 
\[
 R_0^{i}>R_0^{i+1},\quad i=1,\ldots,n-1.
\]
Additional remarks when the fitnesses are not unique can be found in Section 3.3.

\subsection{Equilibria}

 Let $\cn=\{1,2,\ldots,n\}$.  It turns out that the equilibria of \eqref{nb:antigenic} can be conveniently indexed by $(j,\cj)$, where $\cj\subseteq\cn$, and either $j=0$ or $j\not\in\cj$. The corresponding equilibrium point will be denoted by $X_{j,\cj}$. 

Using this notation, we have

\begin{lem} 
\label{lem:eq}
System \eqref{our:antigenic} has $2^{n-1}(2+n)$ equilibrium points which can be written as
\[
 X_{j,\cj} = \left(\frac{\lambda}{d}Q_{j,\cj}^x,\frac{\lambda}{a_1}Q_{j,\cj}^{y_1},\ldots,\frac{\lambda}{a_n}Q_{j,\cj}^{y_n}
,\frac{d}{\beta_1}Q_{j,\cj}^{v_1},\ldots,\frac{d}{\beta_n}Q_{j,\cj}^{v_n},\frac{a_1}{p}Q_{j,\cj}^{z_1},
\ldots,\frac{a_n}{p}Q_{j,\cj}^{z_n}\right),
\]
where
\begin{enumerate}
 \item 
\[
Q_{0,\emptyset}^x=1,\quad\text{and}\quad Q_{0,\emptyset}^{y_i}=Q_{0,\emptyset}^{v_i}=Q_{0,\emptyset}^{z_i}=0.
\]
\item If $j$ is  such that $1\leq j \leq n$, then we have
\[
Q_{j,\emptyset}^x=\frac{1}{R_0^j},\quad Q_{j,\emptyset}^{y_j}=1-\frac{1}{R_0^j},\quad 
Q_{j,\emptyset}^{v_j}=R_0^j-1\quad\text{and}\quad Q_{j,\emptyset}^{z_j}=0.
\]
and
 \[
Q_{j,\emptyset}^{y_i}=Q_{j,\emptyset}^{v_i}=Q_{j,\emptyset}^{z_i}=0,\quad  i=1,\ldots,n,\quad i\not=j.
\]
\item Given $\cj\subseteq \cn$, we have
\[
 Q_{0,\cj}^x=\frac{1}{1+\bro}
\]
and
\[
 Q_{0,\cj}^{y_i}=\frac{1}{I_0^i},\quad Q_{0,\cj}^{v_i}=\frac{R_0^i}{I_0^i},\quad Q_{0,\cj}^{z_i}=\frac{R_0^i}{1+\bro}-1, \quad i\in\cj;
\]
also 
\[
 Q_{0,\cj}^{y_i}= Q_{0,\cj}^{v_i}= Q_{0,\cj}^{z_i}=0,\quad i\not\in\cj.
\]
\item Given a proper subset $\cj\subset \cn$, and  $1\leq j'\leq n, j'\not\in \cj$, we have that
\[
 Q_{j',\cj}^x=\frac{1}{R_0^{j'}},\quad Q_{j',\cj}^{y_{j'}}=1-\frac{1}{R_0^{j'}}-\frac{\bro}{R_0^{j'}},\quad
Q_{j',\cj}^{v_{j'}}=R_0^{j'}-1-\bro,\quad Q_{j',\cj}^{z_{j'}}=0;
\]

for $i\in \cj$, we have
\[
 Q_{j',\cj}^{y_i}=\frac{1}{I_0^i},\quad Q_{j',\cj}^{v_i}=\frac{R_0^i}{I_0^i}, \quad 
Q_{j',\cj}^{z_i}=\frac{R_0^i}{R_0^{j'}}-1.
\]
For $i\not\in \cj$, and $i\not=j'$, we have
\[
 Q_{j',\cj}^{y_i}=Q_{j',\cj}^{v_i}=Q_{j',\cj}^{z_i}=0.
\]
\end{enumerate}
\end{lem}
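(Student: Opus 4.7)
The plan is to proceed by direct case analysis at equilibrium, exploiting the product structure of $\dot v_i=0$ and $\dot z_i=0$. Setting the right-hand sides of \eqref{our:antigenic} to zero, $\dot v_i=0$ yields $v_i=ky_i/u_i$; $\dot z_i=0$ factors as $z_i(c_iy_i-b_i)=0$; and substituting into $\dot y_i=0$ and dividing by $y_i$ when nonzero produces the relation $\beta_i k x/u_i=a_i+pz_i$. Each index $i$ is therefore in exactly one of three scenarios: (A) $y_i=v_i=z_i=0$; (B) $z_i=0$ with $y_i\neq 0$, which forces $x=\lambda/(dR_0^i)$; or (C) $z_i\neq 0$, which forces $y_i=b_i/c_i$.

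I would then index equilibria by the pair $(j,\cj)$, where $\cj$ is the set of strains in scenario (C) and $j$ is either $0$ (if no strain is in scenario (B)) or the unique index realising scenario (B), so that automatically $j\notin\cj$. The uniqueness of $j$ is the only conceptual step: two distinct scenario (B) indices would simultaneously demand $x=\lambda/(dR_0^i)$ for two distinct values of $R_0^i$, contradicting the distinct-fitness hypothesis of this subsection. Counting over $\cj\subseteq\cn$ of size $k$ then gives $n-k+1$ admissible $j$ (the value $0$ plus any element of $\cn\setminus\cj$), so the total equals
\[
\sum_{k=0}^{n}\binom{n}{k}(n-k+1)=2^{n}+n\,2^{n-1}=2^{n-1}(n+2),
\]
matching the claimed cardinality.

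For each pair $(j,\cj)$, I would then read off the coordinates by solving the remaining equations in sequence. For $i\in\cj$, the identity $y_i=b_i/c_i$ rewrites as $\lambda/(a_iI_0^i)$, and $v_i=ky_i/u_i$ rewrites as $(d/\beta_i)(R_0^i/I_0^i)$ via $\beta_ikb_i/(du_ic_i)=R_0^i/I_0^i$. The value of $x$ comes from $\dot x=0$: when $j=0$ only the $\cj$-strains contribute to the sum, giving $Q^{x}=1/(1+\bro)$; when $j\neq 0$, scenario (B) already pins $Q^{x}=1/R_0^j$, and the residual $\dot x$-balance then determines $v_j$ (hence $y_j$) with the $\bro/R_0^j$ correction, recovering items (2) and (4). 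Finally, the nonzero $z_i$ are read off from the reduced form of $\dot y_i=0$ using the known $x$, producing $Q^{z_i}=R_0^i/(1+\bro)-1$ in item (3) and $Q^{z_i}=R_0^i/R_0^j-1$ in item (4); item (1) is simply the trivial case $(j,\cj)=(0,\emptyset)$ with $x=\lambda/d$ as the only nonzero coordinate.

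The computation is mostly bookkeeping, and the main obstacle is ensuring that the raw expressions in $\beta_i,k,u_i,a_i,b_i,c_i,d,\lambda$ collapse into the $R_0^i, I_0^i$ normalisations of the statement; each such collapse is a one-line consequence of the definitions of \S 2.2. The only genuinely non-routine ingredient is the uniqueness of the scenario (B) strain, which rests entirely on the distinct-$R_0^i$ hypothesis that defines the setting considered here.
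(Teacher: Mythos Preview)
Your argument is correct and follows the same route as the paper: classify each strain by whether $y_i=0$, or $y_i\neq0$ with $z_i=0$ (which pins $x$), or $z_i\neq0$ (which pins $y_i$), observe that at most one strain can be of the second kind under the distinct-$R_0^i$ hypothesis, and then read off the remaining coordinates from $\dot x=0$ and $\dot y_i=0$. The paper's proof is simply a terser version of this; your write-up actually adds value by making the trichotomy explicit and by carrying out the binomial-sum verification of the count $2^{n-1}(n+2)$, which the paper asserts but does not compute.
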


\begin{proof}
 The first equilibrium is trivial. The second type of equilibria is obtained by choosing an index $j$ such that $z_j=0$, but $y_j\not=0$. We can choose only one such $j$, since this determines $x$. For the other indices $i$, we set $y_i=v_i=z_i=0$. The first equation then determines $v_j$. The third type is obtained by choosing a set $\cj$ of indices, such that, for $i\in\cj$, we have $z_i\not=0$. This readily determines $y_i$ and $v_i$. For $i\not\in\cj$, we have $y_i=v_i=z_i=0$. The first equation, then, determines $x$. Finally, the last equilibria is found by having a set of indices $\cj$, as in the equilibrium of the third type, and then choosing an index $j'\not\in\cj$ as in the second equilibrium. Again, only one such $j'$ can be chosen. 
\end{proof}

\subsection{Stability analysis}

We are now ready to study the global stability of the equilibria of System~\ref{our:antigenic}. Surprisingly, although there is a large number of equilibria, only four of them will be globally stable. In what follows, unless otherwise is said, we shall assume that that $R_0^{i}>R_0^{i+1}$, for $i=1,\ldots,n-1$, and that the set of strong responders is consistent.

\begin{thm}
 For system \eqref{our:antigenic}, defined on $\O$, and  with initial condition at its interior, there is always a globally asymptotically stable equilibrium 
given as follows:
\begin{enumerate}
 \item $X_{0,\emptyset}$, if $R_0^n\leq 1$; 
\item $X_{1,\emptyset}$, if $1<R_0^1$, and $P_0^1\leq1$. 
\item If $P_0^1>1$, let $\cj$ be the maximal antigenic set. Then
\begin{enumerate}
\item If $\cj$ is a stable antigenic set, then the equilibrium $X_{0,\cj}$ is globally asymptotically stable.
\item  Otherwise, let $j'$ be the smallest integer such that 
$j'\not\in\cj$, which exists by virtue of Lemma~\ref{lem:sets}. Then the equilibrium $X_{j',\cj}$ is globally asymptotically stable.
\end{enumerate}
\end{enumerate}

\label{thm:mutation}
\end{thm}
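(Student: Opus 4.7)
The plan is to construct, for each of the four cases, a Volterra--type Lyapunov function
\[
V = A\, g(x, x^{*}) + \sum_{i=1}^n \bigl[\, B_i\, g(y_i, y_i^{*}) + C_i\, g(v_i, v_i^{*}) + D_i\, g(z_i, z_i^{*})\,\bigr],
\]
where $g(u,u^{*}) = u - u^{*} - u^{*}\ln(u/u^{*})$ when $u^{*}>0$ and $g(u,0)=u$ when $u^{*}=0$. Each building block is nonnegative, strictly convex (resp.\ linear) in its argument, and vanishes only at $u=u^{*}$; hence $V \geq 0$ on $\O$ with $V=0$ exactly at the candidate equilibrium $X_{j,\cj}$ from Lemma~\ref{lem:eq}. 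The strategy throughout is: pick the coefficients $A$, $B_i$, $C_i$, $D_i$ so that the bilinear cross terms arising in $\dot V$ (those involving $\beta_i x v_i$, $p y_i z_i$, $c_i y_i z_i$ and $k y_i$) cancel algebraically; then read off the signs of the remaining terms from the parameter hypotheses.

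A direct computation along \eqref{our:antigenic}, using the equilibrium relations, shows that with the appropriate coefficient choice (morally $B_i/A = 1$, $C_i/B_i = a_i/k$, $D_i/B_i = p/c_i$, with an overall rescaling dictated by the equilibrium values), the derivative $\dot V$ splits into three contributions: (i) the healthy--cell term $-d(x-x^{*})^2/x \leq 0$; (ii) for each index $i$ with $y_i^{*}>0$, a finite sum of ``AM--GM type'' expressions of the form $y_i^{*}\bigl[\,(1-\phi_i) + \ln \phi_i\,\bigr]$ with $\phi_i$ a ratio of state variables, each term being $\leq 0$ by concavity of the logarithm (this is the standard Korobeinikov argument); and (iii) for each $i$ with $y_i^{*}=0$, a linear residue in $(y_i,v_i,z_i)$ whose sign is dictated by a single inequality involving $R_0^i$ and either $1$, $1+\bro$, or $R_0^{j'}$.

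Concretely: in case~(1) the linear residue on strain $i$ carries the factor $R_0^i-1 \leq 0$ since $R_0^i \leq R_0^n \leq 1$; in case~(2) the residue for strains $i\neq 1$ is proportional to $R_0^i/R_0^1 - 1 \leq 0$ while the residue on $z_1$ is proportional to $P_0^1 - 1 \leq 0$; in case~(3a) the residue on strains $i \notin \cj$ is proportional to $R_0^i - (1+\bro)$, which is $\leq 0$ precisely because $\cj$ is a \emph{stable} antigenic set, cf.\ \eqref{ineq:cond:2}; in case~(3b), Lemma~\ref{lem:sets} guarantees that $j'$ is the smallest index outside the maximal antigenic set $\cj$ and satisfies $R_0^{j'}\geq 1+\bro$, the residue on strains $i\notin \cj\cup\{j'\}$ carries $R_0^i - R_0^{j'}\leq 0$ by the strict ordering of fitnesses, and the residue on $z_{j'}$ carries a sign forced by the failure of stability of $\cj$. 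In every case $\dot V \leq 0$ on $\O$.

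Global asymptotic stability then follows from LaSalle's invariance principle once positive invariance of $\O$ and boundedness of trajectories are established (both routine: the first equation gives $\limsup_{t\to\infty} x(t) \leq \lambda/d$, and linear comparisons bound each $y_i$, $v_i$, $z_i$). On $\{\dot V = 0\}$ one has $x=x^{*}$ together with each active strain pinned at its equilibrium ratio; cascading through the ODEs then fixes every remaining coordinate uniquely. I expect the most delicate step to be case~(3b): the coefficients must simultaneously produce a negative semidefinite quadratic--logarithmic form on the ``active'' block $\cj\cup\{j'\}$ \emph{and} nonpositive linear residues on every other strain, and it is exactly the maximality of $\cj$ combined with the strict ordering of the $R_0^i$'s (via Lemma~\ref{lem:sets}) that makes both sign conditions simultaneously compatible.
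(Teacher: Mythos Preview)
Your overall architecture is exactly the paper's: a Volterra--Lyapunov function built from $g(u,u^*)$, cancellation of the bilinear terms by choice of the weights, AM--GM control of the active block, sign-definite linear residues on the inactive strains, and LaSalle to finish. So the strategy is right.

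The gap is in the weights. Your blanket choice $C_i/B_i=a_i/k$ only works in cases (1) and (2). In case (3a), with this choice the $v_i$--residue for an \emph{active} strain $i\in\cj$ is proportional to $R_0^iQ^x_{0,\cj}-1=R_0^i/(1+\bro)-1>0$, while the $y_i$--residue is proportional to $-Q^{z_i}_{0,\cj}=-(R_0^i/(1+\bro)-1)<0$; the combination $a_i\bigl(R_0^i/(1+\bro)-1\bigr)\bigl(-y_i+(u_i/k)v_i\bigr)$ has no sign, and it is not absorbed by the AM--GM block. The paper cures this by switching to $C_i=x^*\beta_i/u_i$ for case (3a), which kills the $v_i$--linear term outright and simultaneously makes the $y_i$--coefficient $kC_i-a_i-a_iQ^{z_i}_{0,\cj}$ vanish for $i\in\cj$; only the AM--GM terms survive on the active block. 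In case (3b) a genuinely hybrid choice is required: $C_i=x^*\beta_i/u_i$ for $i\neq j'$ but $C_{j'}=a_{j'}/k$, so that the $v_{j'}$--residue (rather than the $y_{j'}$--residue) is killed and the $j'$--block can be handled exactly as the single active strain in case (2). An ``overall rescaling'' does not produce this per-strain, per-case adjustment; it has to be built in by hand.

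One smaller point: the nonpositivity of the $z_{j'}$--residue in case (3b) comes from the \emph{maximality} of $\cj$ (since $\cj\cup\{j'\}$ is not antigenic, $R_0^{j'}\le 1+\rho_0^{\cj\cup\{j'\}}$), not from the failure of stability. The failure of stability is what makes $Q^{y_{j'}}_{j',\cj}>0$, i.e.\ what places $X_{j',\cj}$ in the interior in the first place.
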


\begin{proof}[Proof of Theorem~\ref{thm:mutation}]

Following \citeasnoun{Korobeinikov:2004}, we shall use the following Lyapunov function:
\begin{align*}
 V(x,\mathbf{y},\mathbf{v},\mathbf{z})&=x-x^*\ln(x/x^*) + \sum_{i=1}^n\left(y_i-y_i^*\ln(y_i/y_i^*)\right) +\\
& \qquad + \sum_{i=1}^nC_i\left(v_i-v_i^*\ln(v_i/v_i^*)\right)
+ p\sum_{i=1}^n\frac{1}{c_i}\left(z_i-z_i^*\ln(z_i/z^*)\right),
\end{align*}
where $C_i$ will be a constant to be specified later on. 

Then, using the uniform notation of the the equilibria of \eqref{our:antigenic}, that is set  in Lemma~\ref{lem:eq}, see \S3.1, we have that
\begin{align}
 \dot{V}=& \frac{d}{dt}V(x(t),\mathbf{y}(t),\mathbf{v}(t),\mathbf{z}(t))\nonumber\\
&\lambda\left[ +  Q_{j,\cj}^x+ \sum_{i=1}^nQ_{j,\cj}^{y_i} + \frac{d}{\lambda}\sum_{i=1}^nC_i\frac{u_i}{\beta_i}Q_{j,\cj}^{v_i}+ \sum_{i=1}^n\frac{Q_{j,\cj}^{z_i}}{I_0^i}\right]-\left[dx+\frac{\lambda^2 Q_{j,\cj}^x}{dx}\right]-\nonumber\\
&-\lambda\sum_{i=1}^n\frac{\beta_i}{a_i}Q_{j,\cj}^{y_i}\frac{xv_i}{y_i} - dk\sum_{i=1}^n\frac{C_i}{\beta_i}Q_{j,\cj}^{v_i}\frac{y_i}{v_i}
 + \sum_{i=1}^ny_i\left[kC_i-a_i-a_iQ_{j,\cj}^{z_i}\right] +\label{master}\\
& \frac{\lambda}{d}Q_{j,\cj}^x\sum_{i=1}^nv_i\beta_i -\sum_{i=1}^nC_iu_iv_i
 + p\lambda\sum_{i=1}^na_iz_i\left[Q_{j,\cj}^{y_i}-\frac{1}{I_0^i}\right].\nonumber
\end{align}

For the first two equilibria, we consider the Lyapunov function \eqref{master}, with $C_i=a_i/k$.
Then, on using the structure of equilibria of \eqref{our:antigenic}, we may write $\dot{V}$ as follows:
\begin{align*}
 \dot{V}=& \lambda\left[1+Q^x_{j,\cj} + \sum_{i=1}^nQ^{y_i}_{j,\cj}+\sum_{i=1}^n\frac{Q^{v_i}_{j,\cj}}{R_0^i}+\sum_{i=1}^n\frac{Q^{z_i}_{j,\cj}}{I_0^i}\right]-\left[dx+\frac{\lambda^2 Q^x_{j,\cj}}{dx}\right]-\\
&\quad -\lambda\sum_{i=1}^n\frac{\beta_i}{a_i} Q^{y_i}_{j,\cj}\frac{xv_i}{y_i}-
d\sum_{i=1}^n\frac{a_i}{\beta_i}Q^{v_i}_{j,\cj}\frac{y_i}{v_i}-\sum_{i=1}^na_iQ^{z_i}_{j,\cj}y_i +\\
&\quad + 
\frac{\lambda}{d}\sum_{i=1}^n\beta_iv_i\left[Q^x_{j,\cj}-\frac{1}{R_0^i}\right]+p\lambda\sum_{i=1}^n\frac{z_i}{a_i}\left[Q^{y_i}_{j,\cj}-\frac{1}{I_0^i}\right].
\end{align*}
For $X_{0,\emptyset}$, we find, using Lemma~\ref{lem:eq}, that 
\[
 \dot{V}=2\lambda-\left[dx+\frac{\lambda^2}{dx}\right]+\frac{\lambda}{d}\sum_{i=1}^n\beta_iv_i\left[1-\frac{1}{R_0^i}\right]-p\lambda\sum_{i=1}^n\frac{z_i}{a_iI_0^i}.
\]
Since $R_0^i\leq1$, for $i=1,\dots,n$, and
\[
 dx+\frac{\lambda^2}{dx}\geq 2\lambda,
\]
we have that $\dot{V}<0$ in $\O^+$. Hence, that $X_{0,\emptyset}$ is globally asymptotically stable in this case.

Now, suppose that $1<R_0^1$, and that $P_0^1\leq1$. 
In this case, Lemma~\ref{lem:eq} yields that 
\begin{align*}
 \dot{V}=& \lambda\left[3\left(-\frac{1}{R_0^1}\right)+\frac{2}{R_0^1} \right]-\left[dx+\frac{\lambda^2}{R_0^1dx}\right]-
\frac{\lambda}{a_1}\beta_1 \left(1-\frac{1}{R_0^1}\right)\frac{xv_1}{y_1}-
\frac{a_1d}{\beta_1}(R_0^1-1)\frac{y_1}{v_1} +\\
&\quad + 
\frac{\lambda}{d}\sum_{i=1}^n\beta_iv_i\left[\frac{1}{R_0^1}-\frac{1}{R_0^i}\right]+p\lambda\frac{z_1}{a_1}\left[1-\frac{1}{R_0^1}-\frac{1}{I_0^1}\right]-p\lambda\sum_{i=2}^n\frac{z_i}{a_iI_0^i}.
\end{align*}
The last term in $\dot{V}$ is clearly negative. We also observe that, since $R_0^{i}<R_0^{1}$, for $1 < i\leq n$, we have that 
\[
\sum_{i=1}^n\beta_iv_i\left[\frac{1}{R_0^1}-\frac{1}{R_0^i}\right]<0.
\]
Also, since $P_0^1\leq1$, then we have that $R_0^1 \leq 1+ R_0^1/I_0^1$. Therefore, the last three terms in the expression for $\dot{V}$ are negative.

For the remaining terms, let us write
\[
 \frac{\lambda^2}{R_0^1}=\left(\frac{\lambda}{R_0^1}\right)^2 + \left(1-\frac{1}{R_0^1}\right)\frac{\lambda^2}{R_0^1}.
\]
Then we have that
\[
 dx+\left(\frac{\lambda^2}{R_0^1dx}\right)^2\geq 2\frac{\lambda}{R_0^1},
\]
and that
\[
.\frac{\lambda^2}{R_0^1dx}\left(1-\frac{1}{R_0^1}\right) +\lambda\frac{\beta_1}{a_1}\left(1-\frac{1}{R_0^1}\right)\frac{xv_1}{y_1}+\frac{da_1}{\beta_1}(R_0^1-1)\frac{y_1}{v_1}\geq 3\lambda\left(1-\frac{1}{R_0^1}\right).
\]
Thus $\dot{V}<0$ in $\O^+$, and hence we have that $X_{1,\emptyset}$ is a globally asymptotically stable equilibrium.

Finally, if $P_0^1>1$, then let $\cj$ be the maximal antigenic set. First, we  assume that $\cj$ is stable antigenic and show that $X_{0,\cj}$ is globally asymptotically stable. In this case, we use the Lyapunov function \eqref{master}, with $C_i=x^*\beta_i/u_i$.

Using Lemma~\ref{lem:eq}, this can be further recast as $\dot{V}=\dot{V_1}+\dot{V_2}$, where
\begin{align*}
 \dot{V_1}=&\lambda\left[3 - \frac{1}{1+\bro}\right]-\left[dx+\frac{\lambda^2 }{dx\left(1+\bro\right)}\right]
- \lambda\sum_{i\in\cj}\frac{\beta_i}{I_0^ia_i} \frac{xv_i}{y_i}
-\frac{\lambda k}{1+\bro}\sum_{i\in\cj} \frac{R_0^i}{u_iI_0^i}\frac{y_i}{v_i} ;\\
\dot{V_2}=&\sum_{i\not\in\cj}a_iy_i\left[\frac{R_0^i}{1+\bro}-1\right] -p\lambda\sum_{i\not\in\cj}\frac{z_i}{a_iI_0^i}.
\end{align*}
We treat $\dot{V_2}$ first. The last term is clearly negative. Also, since for $i\not\in\cj$, we have that 
\[
 \frac{R_0^i}{1+\bro}<1\quad\text{and thus that}\quad \sum_{i\not\in\cj}y_i\left[\frac{R_0^i}{1+\bro}-1\right]<0.
\]
Therefore, $\dot{V_2}<0$, when $\cj\not=\cn$.

Let
\[
 \eta=\frac{\bro}{1+\bro}\quad\text{and}\quad
\eta_i=\frac{\frac{R_0^i}{I_0^i}}{1+\bro},\quad i\in\cj.
\]
Then, we may write $\dot{V_1}$ as
\begin{align*}
 \dot{V_1}&=\lambda\left[3 - \frac{1}{1+\frac{\bro}{I_0}}\right]-\left[dx+\frac{ \lambda^2}{dx\left(1+\bro\right)^2}\right]
-\\ 
&\qquad - \sum_{i\in\cj}\frac{\lambda^2\eta_i}{dx\left(1+\bro\right)}
- \lambda\sum_{i\in\cj}\frac{\beta_i}{I_0^ia_i} \frac{xv_i}{y_i}
-\frac{\lambda k}{1+\bro}\sum_{i\in\cj} \frac{R_0^i}{u_iI_0^i}\frac{y_i}{v_i}
\end{align*}
For each $i\in\cj$, we have
\[
 -\frac{\lambda^2\frac{R_0^i}{I_0^i}}{dx\left(1+\bro\right)^2} 
- \lambda\frac{\beta_i}{I_0^ia_i} \frac{xv_i}{y_i}
-\frac{\lambda k}{1+\bro}\frac{R_0^i}{u_iI_0^i}\frac{y_i}{v_i}\leq
-3\lambda\left(\frac{\frac{R_0^i}{I_0^i}}{1+\bro}\right),
\]

and that
\[
 dx+\frac{\lambda^2}{dx\left(1+\bro\right)^2}>2\lambda\frac{1}{1+\bro}.
\]
After combining these estimates and summing for $i\in\cj$, we get that $\dot{V_1}\leq0$
and thus we have the result, If $\cj$ is a proper subset of $\cn$. In the case that $\cj=\cn$, we have $\dot{V}\leq0$, with equality ocurring only when
\[
 x=\frac{\lambda}{d}Q_{0,\cj}^x\quad\text{and}\quad \frac{v_i}{y_i}=\frac{k_i}{u_i}.
\]
Inasmuch this plane is not invariant by the corresponding flow---other than the point $X_{0,\cj}$---we have global stability as a consequence of LaSalle's theorem ~\cite{LaSalle:1964}.
For the fourth point, we use a mix of the two Lyapunov functions above, namely:
 \begin{align*}
 V(x,\mathbf{y},\mathbf{v},\mathbf{z})=&x-x^*\ln(x/x^*) + \sum_{i=1}^n\left(y_i-y_i^*\ln(y_i/y_i^*)\right) + \\ 
&\qquad +x^*\sum_{\sti{i=1}{i\not=j'}}^n\frac{\beta_i}{u_i}\left(v_i-v_i^*\ln(v_i/v_i^*)\right)
 + p\sum_{i=1}^n\frac{1}{c_i}\left(z_i-z_i^*\ln(z_i/z^*)\right)+\\
&\qquad + \frac{a_{j'}}{k}\left(v_{j'}-v_{j'}^*\ln(v_{j'}/v_{j'}^*)\right).
\end{align*}

Computing $\dot{V}$ and using the uniform notation, we find that
\begin{align*}
 \dot{V}=&\lambda\left[1 + Q_{j',\cj}^x+\sum_{i=1}^nQ_{j',\cj}^{y_i} + Q_{j',\cj}^x\sum_{\sti{i=1}{i\not=j'}}^nQ_{j',\cj}^{v_i} + \sum_{i=1}^n \frac{Q_{j',\cj}^{z_i}}{I_0^i} + \frac{Q^{v_{j'}}_{j',\cj}}{R_0^{j'}} \right]-\left[dx+\frac{\lambda^2 Q_{j',\cj}^x}{dx}\right]-\\
&\qquad - \lambda\sum_{i=1}^n\frac{\beta_i Q_{j',\cj}^{y_i}}{a_i} \frac{xv_i}{y_i}
-\lambda kQ_{j',\cj}^x\sum_{\sti{i=1}{i\not=j'}}^n\frac{Q_{j',\cj}^{v_i}}{u_i}\frac{y_i}{v_i} + \sum_{\sti{i=1}{i\not=j'}}^ny_i\left[\frac{\beta_i \lambda k}{du_i}Q_{j',\cj}^x-a_i-a_iQ_{j',\cj}^{z_i}\right]+\\ &\qquad +p\lambda\sum_{i=1}^n\frac{z_i}{a_i}\left[Q_{j',\cj}^{y_i}-\frac{1}{I_0^i}\right] -
\frac{da_{j'}}{\beta_{j'}}Q^{v_{j'}}_{j',\cj}\frac{y_{j'}}{v_{j'}} - a_{j'}Q^{z_{j'}}_{j',\cj}y_{j'} +
\frac{\lambda\beta_{j'}}{d}v_{j'}\left[Q^x_{j',\cj}-\frac{1}{R_0^{j'}}\right]
\end{align*}

On using Lemma~\ref{lem:eq}, and that
\[
1=\left(1-\frac{1}{R_0^{j'}}-\frac{\bro}{R_0^{j'}}\right)+\frac{1}{R_0^{j'}}+\frac{\bro}{R_0^{j'}},
\]
where each term in the sum is positive, but smaller than one, we rewrite it as $ \dot{V}=\dot{V_1}+\dot{V_2} + \dot{V_3} + \dot{V_4}$, where
\begin{align*}
\dot{V_1}&=\left(1-\frac{1}{R_0^{j'}}-\frac{\bro}{R_0^{j'}}\right)\left[3\lambda-\frac{\lambda^2}{dxR_0^{j'}}-\frac{\lambda\beta_{j'}}{a_{j'}}\frac{xv_{j'}}{y_{j'}}-\frac{da_{j'}R_0^{j'}}{\beta_{j'}}\frac{y_{j'}}{v_{j'}}\right],\\
\dot{V_2}&=\frac{2\lambda}{R_0^{j'}}-\left[dx+\frac{\lambda^2}{dx(R_0^{j'})^2}\right],\\
\dot{V_3}&=3\lambda\frac{\bro}{R_0^{j'}}-\frac{\lambda^2\bro}{dx(R_0^{j'})^2}-\lambda\sum_{i\in\cj}\frac{\beta_i}{a_iI_0^i}\frac{xv_i}{y_i}-\frac{\lambda k}{R_0^{j'}}\sum_{i\in\cj}\frac{R_0^i}{u_iI_0^i}\frac{y_i}{v_i},\\
\dot{V_4}&=\frac{p\lambda z_{j'}}{a_{j'}}\left(1-\frac{1}{R_0^{j'}}-\frac{\bro}{R_0^{j'}}-\frac{1}{I_0^{j'}}\right).
\end{align*}
The terms $\dot{V_1}$, $\dot{V_2}$ and $\dot{V_3}$ can be treated similarly as in the previous equilibria and are all nonpositive in the interior.

As for $\dot{V_4}$, first we observe that
\[
 1-\frac{1}{R_0^{j'}}-\frac{\bro}{R_0^{j'}}-\frac{1}{I_0^{j'}}=\frac{1}{R_0^{j'}}\left(R_0^{j'}-1-\bro-\frac{R_0^{j'}}{I_0^{j'}}\right)
\]
if $j'\not\in\cs$, then we that
\[
 R_0^{j'}-1-\bro-\frac{R_0^{j'}}{I_0^{j'}}<R_0^{j'}-1-\frac{R_0^{j'}}{I_0^{j'}}\leq0.
\]
If $j'\in\cs$, then let $\cj'=\cj\cup\{j'\}$. Then we have that
\[
 1-\frac{1}{R_0^{j'}}-\frac{\bro}{R_0^{j'}}-\frac{1}{I_0^{j'}}=\frac{1}{R_0^{j'}}\left(R_0^{j'}-1 -\rho_0^{\cj'}\right).
\]
Since $\cj$ is the maximal antigenic set, we must have that 
\[
 R_0^{j'}-1 -\rho_0^{\cj'}\leq0.
\]
\end{proof}
\subsection{Nonunique Fitness}

Given the non-generic nature of this case,  we shall only briefly discuss the stability when some of the strains have the same fitness, i.e., there exists at lest one  index set $\Gamma$, such that $R_0^i=R_0^j$, for $i,j\in\Gamma$. Notice that, in this case, we have non-isolated equilibria.
 
We start by observing that the computation with the Lyapunov function for the equilibrium $X_{0,\emptyset}$ does not depend on the uniqueness of fitness. Hence we have
\begin{cor}
 If $R_0^i\leq1$, for $i=1,\ldots,n$, then $X_{0,\emptyset}$ is a globally asymptotically stable equilibrium.
\end{cor}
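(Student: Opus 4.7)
The plan is to reuse, essentially without modification, the Lyapunov function employed for $X_{0,\emptyset}$ in case~(1) of Theorem~\ref{thm:mutation}. Taking the function in \eqref{master} with $C_i=a_i/k$ and centring it at $X_{0,\emptyset}$ (for which Lemma~\ref{lem:eq} gives $x^*=\lambda/d$ and $y_i^*=v_i^*=z_i^*=0$, so the $y_i$, $v_i$, $z_i$ logarithmic contributions collapse to linear terms), one arrives at exactly the expression
\[
\dot V = 2\lambda - \left(dx + \frac{\lambda^2}{dx}\right) + \frac{\lambda}{d}\sum_{i=1}^n \beta_i v_i\left(1 - \frac{1}{R_0^i}\right) - p\lambda\sum_{i=1}^n \frac{z_i}{a_i I_0^i}
\]
obtained before. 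The crucial observation is that substituting the coordinates of $X_{0,\emptyset}$ into \eqref{master} is purely algebraic and never invokes any ordering or distinctness among the $R_0^i$; hence the derivation goes through verbatim when $R_0^i=R_0^j$ for some $i\neq j$.

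Under the hypothesis $R_0^i\leq 1$ for every $i$, AM--GM gives $dx+\lambda^2/(dx)\geq 2\lambda$, each summand in the third term is non-positive since $1-1/R_0^i\leq 0$, and the final sum is manifestly non-positive. Therefore $\dot V\leq 0$ throughout $\O$, and the sublevel sets of $V$ already supply boundedness of forward orbits in $\O$.

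The remaining step is to upgrade $\dot V\leq 0$ to global asymptotic stability. If every $R_0^i<1$ strictly, then $\dot V<0$ in $\O^+$ and the argument from Theorem~\ref{thm:mutation} applies unchanged. When some $R_0^i=1$, however, $\dot V$ may vanish on a set larger than $\{X_{0,\emptyset}\}$, and I would invoke LaSalle's invariance principle~\cite{LaSalle:1964}, as is done for $X_{0,\cn}$ in Theorem~\ref{thm:mutation}. The zero set $E=\{\dot V=0\}\cap\O$ lies in $\{x=\lambda/d,\, z_i=0\text{ for all }i,\, v_i=0\text{ whenever }R_0^i<1\}$. On the forward-invariant subset of $E$, the equation $\dot x=0$ forces $\sum_i \beta_i v_i=0$, so $v_i=0$ also for indices with $R_0^i=1$; then $\dot v_i=ky_i$ forces $y_i=0$. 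Thus the largest invariant subset of $E$ reduces to $\{X_{0,\emptyset}\}$, which yields global asymptotic stability.

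No serious obstacle is expected: the heart of the matter is the remark, already made in the paragraph preceding the corollary, that the computation at $X_{0,\emptyset}$ is insensitive to coincidences among the strain fitnesses. The only mild technicality is the LaSalle cleanup needed to accommodate equality $R_0^i=1$, which is a boundary configuration within the stated hypothesis and is handled exactly as in the corresponding step of Theorem~\ref{thm:mutation}.
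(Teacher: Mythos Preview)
Your approach is exactly the paper's: the corollary is proved simply by observing that the Lyapunov computation for $X_{0,\emptyset}$ in Theorem~\ref{thm:mutation} nowhere uses that the $R_0^i$ are distinct, so it carries over verbatim to the nonunique-fitness setting. One small remark: the LaSalle cleanup you add for the case $R_0^i=1$ is not actually needed under the paper's standing assumption of initial data in $\O^+$, since there every $z_i>0$ and the term $-p\lambda\sum_i z_i/(a_iI_0^i)$ already forces $\dot V<0$ strictly; your extra step would only be required if one wished to treat boundary initial conditions as well.
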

The case $I_0^i=I_0$, $1<R_0^1$ and $P_0^1<1$ can also be partially treated:
\begin{prop}
\label{prop:ls}
 Let $\Gamma$ be the set of indices $i\in\cn$ such that $R_0^i=R_0^n$. Let $E_\Gamma$ be the set satisfying
\[
x^*=\frac{\lambda}{dR_0^n},\quad y_j=v_j=0,\quad j\not\in\Gamma,\quad \sum_{j\in\Gamma}\beta_jv_j=\frac{dx^*-\lambda}{x^*}, \quad v_j\geq0,
\]
\[
v_i=\frac{k}{u}y_i,\quad\text{and}\quad z_i=0, \quad i\in\cn.
\]
If $\mathbf{x}(t,\mathbf{x}_0)$ is a solution of \eqref{nb:antigenic}, with initial condition $\mathbf{x}_0$, then 
\[
 \mathbf{x}(t)\to E_\Gamma,\quad\text{as}\quad t\to\infty.
\]
\end{prop}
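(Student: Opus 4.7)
The plan is to adapt the Lyapunov function used for $X_{1,\emptyset}$ in case (2) of Theorem~\ref{thm:mutation} to the degenerate situation where several strains share the top fitness $R_0^n$, and to close the argument by invoking LaSalle's invariance principle. First I would fix a reference point in $E_\Gamma$: take $x^* = \lambda/(dR_0^n)$, pick any nonnegative $(v_j^*)_{j\in\Gamma}$ with $\sum_{j\in\Gamma}\beta_j v_j^* = d(R_0^n-1)$, and set $y_j^* = u_j v_j^*/k$ for $j\in\Gamma$, so that $\dot x|_* = 0$ and $\dot v_j|_* = 0$. The Lyapunov candidate I would use is then the Korobeinikov functional on the $\Gamma$-coordinates and $x$, augmented by a linear (non-logarithmic) penalty on the non-$\Gamma$ coordinates whose reference values vanish:
\begin{align*}
V &= x - x^*\ln(x/x^*) + \sum_{i\in\Gamma}\Bigl[(y_i - y_i^*\ln(y_i/y_i^*)) + \tfrac{a_i}{k}(v_i - v_i^*\ln(v_i/v_i^*))\Bigr] \\
&\quad + \sum_{i\notin\Gamma}\Bigl[y_i + \tfrac{a_i}{k} v_i\Bigr] + p\sum_{i=1}^n \frac{z_i}{c_i}.
\end{align*}

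Next I would differentiate $V$ along the flow and decompose $\dot V$ into a ``$\Gamma$-block'' involving $x$ and $(y_i,v_i)_{i\in\Gamma}$, a ``non-$\Gamma$-block'' in $(y_i,v_i)_{i\notin\Gamma}$, and a ``$z$-block''. Using the relation $\lambda = dx^* + x^*\sum_{i\in\Gamma}\beta_i v_i^*$, the $\Gamma$-block collapses, exactly as in the proof of Theorem~\ref{thm:mutation}, into sums of three-term AM-GM expressions in $x/x^*$, $xv_iy_i^*/(x^* y_i v_i^*)$ and $y_i v_i^*/(y_i^* v_i)$, each $\le 0$ and vanishing iff $x = x^*$ and $v_i/y_i = k/u_i$ for $i\in\Gamma$. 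The non-$\Gamma$-block contributes, for each $i\notin\Gamma$, a term proportional to $\beta_i v_i\bigl(x - \lambda/(dR_0^i)\bigr)$; since $R_0^i < R_0^n$ we have $\lambda/(dR_0^i) > x^*$, so this is nonpositive when $x \le x^*$ and otherwise is absorbed against the negative quadratic $-d(x-x^*)^2/x$ extracted from the $x$-part. The $z$-block reduces to $-p\sum_i z_i(\cdot)$ and is nonpositive because $I_0^i = I_0$ together with $P_0^1 < 1$ yields $P_0^i < 1$ for every $i$, so the CTL terms contribute nonpositively and vanish only at $z_i = 0$.

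With $\dot V \le 0$ in $\O^+$, I would then invoke LaSalle's invariance principle. Sublevel sets of $V$ are compact in $\O^+$ by the same coercivity argument used earlier in the paper, and equality in the three blocks forces $x = x^*$, $v_i = (k/u_i) y_i$ for $i\in\Gamma$, $y_i = v_i = 0$ for $i\notin\Gamma$, and $z_i = 0$ for every $i$. Imposing $\dot x = 0$ on this locus and substituting the vanishing non-$\Gamma$ coordinates pins down $\sum_{j\in\Gamma}\beta_j v_j = d(R_0^n - 1)$, which is precisely the defining constraint of $E_\Gamma$. Invariance of $E_\Gamma$ under the reduced flow is immediate, so the $\omega$-limit of any positive initial condition is contained in $E_\Gamma$.

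The step I expect to be the main obstacle is the bookkeeping showing $\dot V \le 0$ on the region $\{x > x^*\}$: the linear penalty on the non-$\Gamma$ strains is not self-dissipative there, and the argument must siphon the negativity of $-d(x-x^*)^2/x$ (together with the identity $x\sum_i \beta_i v_i = \lambda - dx - \dot x$ pulled from the first ODE) to dominate the non-$\Gamma$ contributions. If this tight balancing proves too delicate to carry out cleanly, a natural fallback is to first establish $\limsup_{t\to\infty} x(t) \le \lambda/d$ from the comparison $\dot x \le \lambda - dx$, then show by a linear differential-inequality argument exploiting the strict gap $R_0^i < R_0^n$ that the non-$\Gamma$ strains decay exponentially, and finally apply the Lyapunov-LaSalle argument to the reduced $\Gamma$-subsystem alone.
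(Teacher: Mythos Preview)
Your overall strategy---use the Korobeinikov functional with weights $C_i=a_i/k$ around a reference point of $E_\Gamma$, then close with LaSalle---is exactly the route the paper takes. The identification of the largest invariant subset of $\{\dot V=0\}$ that you sketch is in fact more careful than the paper's, which simply asserts $\dot V=0$ on $E_\Gamma$ and invokes LaSalle.

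The one substantive error is in your computation of the non-$\Gamma$ block, and it actually makes the argument \emph{easier}, not harder. For $i\notin\Gamma$ you have $y_i^*=v_i^*=0$, so the relevant pieces of $\dot V$ are
\[
(1-x^*/x)(-x\beta_i v_i)\;+\;\dot y_i\;+\;\tfrac{a_i}{k}\dot v_i
\;=\;-\beta_i v_i(x-x^*)+\beta_i x v_i - a_i y_i - p y_i z_i + a_i y_i - \tfrac{a_i u_i}{k}v_i.
\]
The $\beta_i x v_i$ terms cancel exactly, and what remains is
\[
\beta_i v_i\Bigl(x^*-\tfrac{a_i u_i}{k\beta_i}\Bigr)-p y_i z_i
\;=\;\frac{\lambda\beta_i}{d}\Bigl(\tfrac{1}{R_0^1}-\tfrac{1}{R_0^i}\Bigr)v_i - p y_i z_i,
\]
which is a sum of terms with constant nonpositive coefficients (strictly negative in $v_i$ since $R_0^i<R_0^1$ for $i\notin\Gamma$). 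There is no residual $x$-dependence and hence no need to ``siphon'' negativity from $-d(x-x^*)^2/x$; the anticipated obstacle on $\{x>x^*\}$ simply does not arise, and your fallback plan is unnecessary. With this correction your $\dot V$ splits cleanly into the AM--GM $\Gamma$-block, the constant-sign non-$\Gamma$ terms above, and the $z$-block, each nonpositive, matching the paper's computation.
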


\begin{proof}
Let us denote the omega set of $\mathbf{x}(t,\mathbf{x}_0)$ by $\Omega(\mathbf{x_0})$. As shown in \cite{Pastore:2005}, the solutions to \eqref{nb:antigenic} are bounded in $\mathbb{R}_+^{3n+1}$.
Hence, $\Omega(\mathbf{x_0})$ is compact. 

Using the same Lyapunov function for $X_{1,\emptyset}$ as in Section 3.2, we find that
\begin{align*}
 \dot{V}=& \lambda\left[3-\frac{1}{R_0^1} \right]-\left[dx+\frac{\lambda^2}{R_0^1dx}\right]-
\frac{\lambda}{a}\beta_1 \left(1-\frac{1}{R_0^1}\right)\frac{xv_1}{y_1}-
\frac{ad}{\beta_1}(R_0^1-1)\frac{y_1}{v_1} +\\
&\quad + 
\frac{\lambda}{d}\sum_{i\not\in\Gamma}\beta_iv_i\left[\frac{1}{R_0^1}-\frac{1}{R_0^i}\right]+\frac{p\lambda}{a}\sum_{i=1}^nz_i\left[1-\frac{1}{R_0^i}-\frac{1}{I_0}\right].
\end{align*}
The same calculations in section 3.2 shows that $\dot{V}\leq0$. However, notice that $\dot{V}=0$ in $E_\Gamma$. If $s\in\mathbb{R}^N$ and $S\subset\mathbb{R}^N$ is closed, then let
\[
 d(s,S)=\min_{s'\in S}\|s-s'\|.
\]
LaSalle's invariant principle then yields that,
\[
 \lim_{t\to\infty}d(\mathbf{x}(t),E_\Gamma)=0.
\]

\end{proof}

\begin{cor}
 If $R_0^1>R_0^i$ for $i=2,\ldots,n$, then $X_{1,\emptyset}$ is a globally asymptotically stable equilibrium, when $1<R_0^1\leq1+R_0^1/I_0$.
\end{cor}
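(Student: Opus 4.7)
The plan is to derive this corollary as a direct specialization of Proposition~\ref{prop:ls} to the case when strain~$1$ is the unique fittest strain. Under the hypothesis $R_0^1>R_0^i$ for all $i\geq 2$, the set of indices sharing the top value of the basic reproductive number is the singleton $\{1\}$; this is exactly the set $\Gamma$ appearing in Proposition~\ref{prop:ls}, namely the collection of indices on which the coefficient $1/R_0^1-1/R_0^i$ in the relevant term of $\dot V$ fails to be strictly negative. The remaining hypothesis $1<R_0^1\leq 1+R_0^1/I_0$ is just $1<R_0^1$ together with $P_0^1\leq 1$ in the $I_0^i=I_0$ setting, so the running assumptions of Proposition~\ref{prop:ls} are met.

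Next I would verify that the attracting set $E_\Gamma$ described in Proposition~\ref{prop:ls} collapses to the single equilibrium $X_{1,\emptyset}$ once $\Gamma=\{1\}$. Indeed, the requirements $y_j=v_j=0$ for $j\notin\Gamma$ together with $z_i=0$ for every $i\in\cn$ kill every coordinate other than $x$, $y_1$ and $v_1$; then $x^*=\lambda/(dR_0^1)$ combined with the scalar constraint on $\sum_{j\in\Gamma}\beta_jv_j$ forces $\beta_1v_1=d(R_0^1-1)$, and $v_1=(k/u)y_1$ determines $y_1$. Comparing with Lemma~\ref{lem:eq}(2) shows that the resulting point is exactly $X_{1,\emptyset}$.

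Since $E_\Gamma$ reduces to the single equilibrium $X_{1,\emptyset}$, the conclusion $d(\mathbf{x}(t),E_\Gamma)\to 0$ furnished by Proposition~\ref{prop:ls} becomes $\mathbf{x}(t)\to X_{1,\emptyset}$, giving global attraction from the interior of $\O$. Lyapunov stability of $X_{1,\emptyset}$ is immediate from the positive definiteness of the Lyapunov function about that point together with $\dot V\leq 0$, and combining the two statements yields global asymptotic stability.

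I do not expect any substantive obstacle: the only point that requires care is the observation that under the strict inequality $R_0^1>R_0^i$ the set $\Gamma$ is forced to be a singleton and hence $E_\Gamma$ is a single equilibrium, which short-circuits any further LaSalle-invariance argument beyond what is already built into the proof of Proposition~\ref{prop:ls}. The algebraic reconciliation of $E_\Gamma$ with $X_{1,\emptyset}$ is a direct comparison with Lemma~\ref{lem:eq}, and no new Lyapunov function is required.
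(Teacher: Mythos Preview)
Your proposal is correct and follows exactly the route the paper intends: the corollary is placed immediately after Proposition~\ref{prop:ls} with no separate proof, precisely because under the hypothesis $R_0^1>R_0^i$ the set $\Gamma$ collapses to $\{1\}$ and $E_\Gamma$ to the single point $X_{1,\emptyset}$. Your explicit verification that $E_\Gamma=\{X_{1,\emptyset}\}$ via Lemma~\ref{lem:eq}(2), and your remark that Lyapunov stability (not just attraction) follows from the positive definiteness of $V$ near $X_{1,\emptyset}$ together with $\dot V\leq 0$, fill in details the paper leaves implicit.
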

We have performed numerical calculations of \eqref{nb:antigenic}, using a high order Runge-Kutta method, which suggest that  in the case treated by proposition~1, a solution of System~\eqref{nb:antigenic} will converge to a unique equilibrium point in $E_\Gamma$, that depends  only on the initial condition.

Finally, when the viable set of strains is not the full antigenic variation, we have
\begin{prop}
 If $I_0^i=I_0$, $P_0^1>1$. Assume that  $\cj\not=\ci$ is a stable antigenic set, then we have that $X_{0,\cj}$ is a globally asymptotic stable equilibrium.
\end{prop}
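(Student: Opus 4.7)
The plan is to mimic the proof of case 3(a) of Theorem~\ref{thm:mutation}, since the algebraic manipulations there never actually invoke the uniqueness of the fitnesses $R_0^i$; what they do use is precisely the pair of inequalities defining a stable antigenic set. Hence I would start from the same Lyapunov function as in that case, namely \eqref{master} with $C_i = x^*\beta_i/u_i$, centred at the candidate equilibrium $X_{0,\cj}$ whose coordinates are given by part (3) of Lemma~\ref{lem:eq}. The hypothesis $I_0^i = I_0$ is used only to guarantee that the set of strong responders is consistent and that the definitions of antigenic/stable antigenic sets behave as in the unique-fitness case.

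Differentiating $V$ along trajectories of \eqref{our:antigenic} and using the explicit form of $X_{0,\cj}$, I would again decompose $\dot V = \dot V_1 + \dot V_2$ exactly as in Theorem~\ref{thm:mutation}, collecting the $x$-dependent terms together with the contributions from the persistent strains $i \in \cj$ into $\dot V_1$, and the contributions from $i \notin \cj$ into $\dot V_2$. The term $\dot V_1$ is handled verbatim by the AM-GM estimates applied to triples of the form $(x,\,xv_i/y_i,\,y_i/v_i)$, giving $\dot V_1 \leq 0$; no fitness ordering is used in that step. For
\[
\dot V_2 \;=\; \sum_{i\notin\cj} a_i y_i \left[\frac{R_0^i}{1+\bro}-1\right] \;-\; p\lambda \sum_{i\notin\cj}\frac{z_i}{a_i I_0^i},
\]
the \emph{stable} antigenic hypothesis supplies the inequality $R_0^i \leq 1 + \bro$ for $i \notin \cj$, making the bracketed coefficients nonpositive; together with the manifestly nonpositive $z_i$-sum, this gives $\dot V_2 \leq 0$. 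Combining, $\dot V \leq 0$ throughout the interior of $\O$.

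To upgrade this to global asymptotic stability of $X_{0,\cj}$ I would invoke LaSalle's invariance principle, much as is done in Proposition~\ref{prop:ls} and in the $\cj = \cn$ branch of Theorem~\ref{thm:mutation}; boundedness of trajectories in $\R_+^{3n+1}$ follows from the estimate cited in the proof of Proposition~\ref{prop:ls}, so the omega limit sets are compact. The main obstacle, and the reason this statement is weaker than the unique-fitness case, is to identify the largest invariant subset of $\{\dot V = 0\}$. On that set, equality in AM-GM forces $x = x^*$ and $y_i/v_i$ to its equilibrium value for each $i \in \cj$, while the vanishing of $\dot V_2$ only forces $y_i = 0$ at those $i \notin \cj$ with $R_0^i < 1 + \bro$ strictly; if some strain outside $\cj$ saturates the stable-antigenic inequality, equality $R_0^i = 1 + \bro$ can hold without forcing $y_i = 0$. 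The remaining task, then, is to substitute this information back into \eqref{our:antigenic}, use the invariance of the hyperplanes $z_i = 0$ and the dynamical equations for $\dot y_i$ and $\dot v_i$ at the saturating indices, and thereby collapse the invariant set to the single point $X_{0,\cj}$, completing the proof.
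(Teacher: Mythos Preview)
Your approach---same Lyapunov function, same decomposition $\dot V=\dot V_1+\dot V_2$---is exactly the paper's. The paper's entire proof is two lines: the estimate $\dot V_1\le 0$ carries over verbatim from Theorem~\ref{thm:mutation}, and, since $\cj$ is a proper subset, $\dot V_2<0$ strictly in $\O^+$, so $\dot V<0$ there and the conclusion follows directly.

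Where you diverge is in your treatment of $\dot V_2$. You worry that if some $i\notin\cj$ saturates the stable-antigenic inequality $R_0^i=1+\bro$, then the $y_i$-bracket vanishes and $\dot V_2$ could be zero without forcing $y_i=0$, leaving you to disentangle an invariant set via LaSalle. But look again at the second sum in
\[
\dot V_2 \;=\; \sum_{i\notin\cj} a_i y_i\left[\frac{R_0^i}{1+\bro}-1\right] \;-\; p\lambda\sum_{i\notin\cj}\frac{z_i}{a_i I_0^i}.
\]
For any initial condition in the interior of $\O$ we have $z_i>0$ for every $i$, and since $\cj\neq\cn$ the index set in the second sum is nonempty. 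Hence that sum is strictly negative regardless of what the $y_i$-brackets do, and $\dot V_2<0$ throughout $\O^+$. No saturating-index analysis and no LaSalle step is needed; the paper's proof stops here. Your route would still reach the destination, but the shortcut you missed is that the $z_i$-term alone already delivers strict negativity.
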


\begin{proof}
 In this case, the estimate  $\dot{V}_1\leq0$ remains valid. Moreover, if $\cj\not=\ci$, then we must have $\dot{V}_2<0$, which yields the result.
\end{proof}

\section{Conclusions} \label{conclusions}
In this work, we have performed a thorough study of System~\eqref{nb:antigenic}. 
As a preliminary result, we have shown that, when the both the virus production rate and the CTL interaction rate are nonzero for all strains, then System~\eqref{nb:antigenic} is dynamically equivalent to System~\eqref{our:antigenic}, which has strain-independent virus production  and CTL interaction rates. In particular, the precise nonzero values of the CTL interaction rates are completely irrelevant to the dynamical behaviour of the system. This seems to suggest that a more refined model which is able to capture this difference is needed.

We have also identified all the $2^{n-1}(2+n)$ equilibria of \eqref{our:antigenic} in Lemma~\ref{lem:eq}. When $n$ is large this can be quite a large number. Nevertheless, under the hypothesis of unique fitness, we were able to show that only four of them are dynamically relevant. More precisely, we assume that the virus basic reproduction rates, $R_0^i$ are distinct and that the CTL reproduction numbers, $P_0^i$ have the same ordering as the $R_0^i$. This last condition is automatically satisfied if $I_0^i=I_0$ for all $i$. In this case, Theorem~\ref{thm:mutation} shows that, if the largest reproduction number, $R_0^1$, is smaller than one, the the disease-free equilibrium---$X_{0,\emptyset}$ in the notation of Lemma~\ref{lem:eq}---is globally asymptotic stable. In this case, no strain is viable and the infection dies out. On the other hand, if $R_0^1>1$, but $P_0^1<1$, then only the first strain survives, and the infection persists. If $P_0^1>1$, then we have that the two outcomes are possible: either a (unique) stable antigenic set exists and then $X_{0,\cj}$ is globally asymptotic stable. In this case, the set $\cj$ determines the antigenic diversity.   In other words, a strong immune response generates a larger antigenic variation. Alternatively, there exits  a pair $(j',\cj)$ such that the point $X_{j',\cj}$ is globally asymptotic stable. In this case, the strain with the weakest fitness  will not actually trigger the CTL response at all in the long run. 
In the case of absence of antigenic variation, i.e. System~\eqref{nb:immune}, then only the first outcome is possible. We were unable to interpret in a biological sense the combinatorial conditions of existence of a stable antigenic set, and we believe that this should be addressed in the future.
The results presented in Theorem~\ref{thm:mutation} show rigorously some of the inferences that have already been made in \citeasnoun{Nowak:Bangham:1996} based on extensive simulations of System~\eqref{nb:antigenic}.

We have also shown some results for very special  cases in which the $R_0^i$s are not distinct. In these cases, the equilibria is not isolated and this complicates the matters further. Also, we have not addressed that case when the set of strong responders is not consistent, and this might also merit further study in the future.

\bibliographystyle{harvard}

\bibliography{hiv,bbb}

\end{document}